\newif\ifDraft
\newif\ifSubmission
\newif\ifProceeding
\newif\ifFull
\newif\ifArxiv
\newif\ifJournal
\let\oldendproof\endproof
\def\endproof{\hfill \ensuremath{\Box}\oldendproof}
\let\oldendproof\endproof
\def\endproof{\qed\oldendproof}
\newif\ifUseAppendix
\newif\ifInAppendix
\newif\ifDeferToAppendix
\newcounter{mycommentcounter}
\newcommand{\Comment}[2][Comment]{\refstepcounter{mycommentcounter}%
    \ifhmode%
     \unskip%
     {\dimen1=\baselineskip \divide\dimen1 by 2 %
       \raise\dimen1\llap{\tiny
  {-\themycommentcounter-}}}\fi%
     \marginpar[{\renewcommand{\baselinestretch}{0.8}%
       \hspace*{-2em}\begin{minipage}{12em}\footnotesize%
[\themycommentcounter]:%
\raggedright \underline{#1}: #2\end{minipage}}]{\renewcommand{\baselinestretch}{0.8}%
       \begin{minipage}{12em}\footnotesize%
[\themycommentcounter]: \raggedright%
\underline{#1}: #2\end{minipage}}%
}
\newcommand{\Comment}[2][Comment]{\relax}        
\newcommand{\marrow}{\marginpar[\hfill$\longrightarrow$]{$\longleftarrow$}}
\newcommand{\personalremark}[3]{\textcolor{red}{\textsc{#1 #2:} \marrow}\textcolor{blue}{\textsf{#3}}}
\newcommand{\personalremark}[3]{\relax}     
\newcommand{\maarten}[2][says]{\personalremark{Maarten}{#1}{#2}}
\newcommand{\define}[1]{{\bfseries\itshape #1}}
\title{Planar and Poly-Arc Lombardi Drawings}
\author{Christian A. Duncan\inst{1} \and
  David Eppstein\inst{2} \and
  Michael T. Goodrich\inst{2} \and \\
  Stephen G. Kobourov\inst{3} \and
  Maarten L{\"o}ffler\inst{2}}
\institute{\noindent
\inst{1}Department of Computer Science, Louisiana Tech Univ., Ruston, Louisiana, USA\\
\inst{2}Department of Computer Science, University of California, Irvine, California, USA\\
\inst{3}Department of Computer Science, University of Arizona, Tucson, Arizona, USA}
\begin{document}

\maketitle

\begin{abstract}
In Lombardi drawings of graphs, edges are represented as circular arcs,
and the edges incident on vertices have 
perfect angular resolution. However, not every graph has a Lombardi drawing, and not every planar graph has a planar Lombardi drawing. We introduce $k$-Lombardi drawings, in which each edge may be drawn with $k$ circular arcs, noting that every graph has a smooth $2$-Lombardi drawing. We show that every planar graph has a smooth planar $3$-Lombardi drawing and further investigate topics connecting planarity and Lombardi drawings.
\end{abstract}

\ifJournal
\ifSubmission
\linenumbers
\fi
\fi

\section{Introduction}

Motivated by the work
of the American abstract artist Mark Lombardi~\cite{lh-mlgn-03}, 
who specialized in drawings that illustrate financial and political networks,
Duncan {\it et al.}~\cite{degkn-dtwpa-10,degkn-ldg-10}
proposed a graph visualization called \emph{Lombardi drawings}.
These types of drawings attempt
to capture some of the visual aesthetics used by Mark Lombardi,
including his use of circular-arc edges and well-distributed
edges around each vertex.

A vertex with circular arc edges extending
from it has \define{perfect angular resolution} if the angles between
consecutive edges, as measured by the tangents to the circular arcs at the
vertex, all have the same degree.  A \define{Lombardi drawing} of a
graph $G=(V,E)$ is a drawing of a graph where every vertex is
represented as a point, the edges incident on each vertex have
perfect angular resolution, and
every edge is represented as a line segment or circular arc between
the points associated with adjacent vertices.

One drawback of previous work on Lombardi drawings is that (as we
prove here) not every graph has a Lombardi drawing. In this paper we
attempt to remedy this by considering drawings in which edges are
represented by multiple circular arcs. This added generality
allows us to draw any graph.

\paragraph{$k$-Lombardi Drawings.}
We define a \define{$k$-Lombardi drawing} to be a drawing with at most
$k$ circular arcs per edge, with a 1-Lombardi drawing being equivalent
to the earlier definition of a Lombardi drawing.
We say that a $k$-Lombardi drawing is \define{smooth} if every edge is
continuously differentiable, i.e., no edge in the drawing has a sharp
bend.
If a $k$-Lombardi drawing is not smooth, we say it is \define{pointed}.
Fortunately,
we do not need large values of $k$ to be able to draw all graphs: as we show, every graph has a smooth 2-Lombardi drawing.
Interestingly, this result is hinted at in the work of Lombardi 
himself---Figure~\ref{fig:lombardi} shows a portion of a drawing by Lombardi
that uses smooth edges consisting of two near-circular arcs.

\begin{figure}[hbt!]
  \centering
  \includegraphics[width=3in]{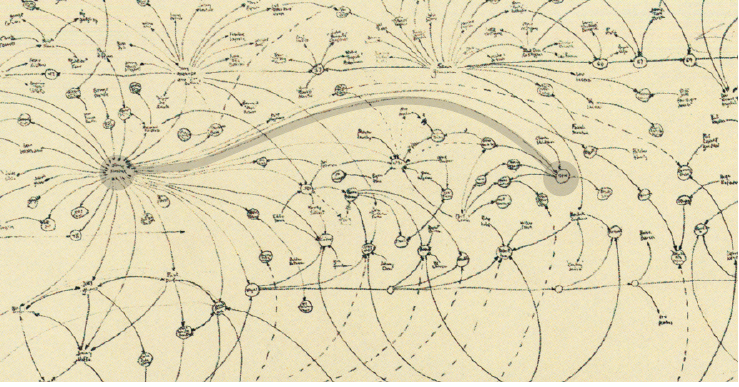} 
  \caption{
  A portion of Mark Lombardi, \textit{Chicago Outfit and Satellite Regimes,
  ca.~1931--83}, 1998,
  $48.125 \times 96.6225$ inches (cat.~no.~11)~\cite{lh-mlgn-03}.
Note the highlighted smooth two-arc edge.
  }
  \label{fig:lombardi}
\end{figure}


\begin{figure}[bht]
  \centering
  \includegraphics[width=2.75in]{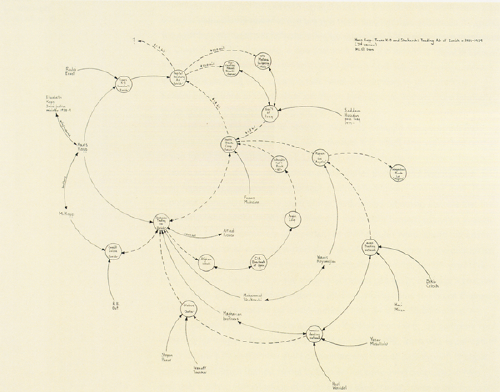} 
  \caption{Mark Lombardi, \textit{Hans Kopp, Trans K-B and Shakarchi
  Trading AG of Zurich, ca.~1981--89} (3rd Version), 1999,
  $20.25 \times 30.75$ inches (cat.~no.~22)~\cite{lh-mlgn-03}.}
  \label{fig:lombardi-planar}
\end{figure}

\paragraph{Planar Lombardi Drawings.}
Drawing planar graphs without crossings is a natural goal for graph
drawing algorithms and is easily achieved when angular
resolution is ignored. 
Lombardi himself avoided crossings in many of his drawings,
as shown in Figure~\ref{fig:lombardi-planar}. 
In previous work on Lombardi drawings,
Duncan {\it et al.}~\cite{degkn-ldg-10} 
showed that there exist embedded planar graphs that have Lombardi
drawings but do not have {\em planar Lombardi drawings}.
Here we continue this investigation of planar Lombardi drawings and extend it to planar $k$-Lombardi drawings.

\paragraph{New Results.}
In this paper we provide the following results:
\begin{enumerate}
\item We find examples of graphs that do not have a Lombardi drawing,
  regardless of the ordering of edges around each vertex, thus
  strengthening an example from \cite{degkn-ldg-10} of graphs for
  which a specific edge ordering cannot be drawn.
\item We show how to construct a smooth 2-Lombardi drawing for any graph.
\item  We
find examples of planar 3-trees with no planar Lombardi
drawing, strengthening an example from  \cite{degkn-ldg-10} of a
planar graph with treewidth greater than three that is not planar Lombardi.
\item We show how to represent any planar graph of maximum degree three with a smooth 2-Lombardi planar drawing and any planar graph with a pointed 2-Lombardi planar drawing or a smooth 3-Lombardi planar drawing.
\end{enumerate}

\paragraph{Other Related Work.}
In addition to the earlier work on Lombardi drawings, there is
considerable prior work on 
graph drawing with circular-arc or curvilinear edges for the sake of
achieving good, but not necessarily perfect, angular resolution~\cite{cdgk-dpg-01,gw-fdpg-01}.
There is also significant work on
\emph{confluent
drawings}~\cite{%
DBLP:journals/jgaa/DickersonEGM05,%
DBLP:conf/gd/EppsteinGM05,%
DBLP:journals/algorithmica/EppsteinGM07,%
hmr-becg-07,hv-fdeb-09}, which use curvilinear edges not to separate edges but rather to bundle similar edges together and avoid edge crossings.
Brandes and Wagner~\cite{bw-uglv-98} provide a force-directed algorithm
for visualizing train schedules using B{\'e}zier curves for edges and
fixed positions for vertices.
Finkel and Tamassia~\cite{ft-cgduf-05} extend this work by giving
a force-directed
method for drawing graphs with curvilinear edges where vertex
positions are not fixed. 
Aichholzer {\em et al.}~\cite{aaadj-at-10} show,
for a given embedded planar triangulation with fixed vertex positions,
it is possible to find a circular-arc 
drawing that maximizes the minimum angular resolution 
by solving a linear program.
In addition,
Matsakis~\cite{m-trgd-10} describes a force-directed approach to
producing Lombardi drawings, but without an implementation.
Goodrich and Trott~\cite{gt-fdls-11} and
Chernobelskiy
{\em et al.}~\cite{cck-lse-11}, on the other hand,
describe functional Lombardi force-directed schemes, which are respectively 
based on the use of dummy vertices and tangent forces, but may not
always achieve perfect angular resolution.
\ifFull
Interestingly,
Efrat {\em et al.}~\cite{eek-flc-07} show that, given a fixed placement 
of the vertices of a planar graph, it is NP-complete
to determining whether the edges can be drawn with circular 
arcs so that there are no crossings. 
\fi
Thus,
to the best of our knowledge, none of this other related work correctly results in drawings
of graphs having perfect angular resolution and curvilinear edges.

Alternatively, some previous  work 
achieves good angular resolution using straight-line 
drawings~\cite{dv-aptg-96,gt-pdara-94,mp-arpg-94}
or piecewise-linear poly-arc drawings~\cite{elmn-o3arldg-10,gm-ppdga-98,k-dpguc-96}.
Di~Battista
and Vismara~\cite{dv-aptg-96} 
characterize straight-line drawings of planar graphs
with a prescribed assignment of angles between consecutive edges
incident on the same vertex.

\section{$k$-Lombardi Drawings}
\label{sec:kLombardi}
In this section, we investigate $k$-Lombardi drawings.
First, we establish the need to use poly-arc edges in order to
be able to draw any graph.

\subsection{Non-Lombardi Graphs}

\ifInAppendix
\else
Duncan {\em et al.}~\cite{degkn-ldg-10} show a graph, Figure~\ref{fig:G7_bad}, for
which no Lombardi drawing is possible 
{\em while preserving the given ordering of edges around each vertex}.
However, as Figure~\ref{fig:G7_good} shows, if the ordering is not fixed,
it is possible to create a valid Lombardi drawing for the graph.
In this section, we provide a graph that has no Lombardi drawing {\em irrespective of the edge ordering}.

There are some complications in proofs of non-Lombardi counterexamples that differ from
counterexamples in straight-line planar drawings.
For example, if graph $G$ is non-Lombardi, this does not imply
that all graphs $H \supset G$ are non-Lombardi because the addition
of edges changes the angular resolution and can therefore
dramatically change the subsequent placement of vertices.
In addition, since the edge ordering is not fixed by the input, we
must argue that any ordering forces a conflict.

Additional complications concern the density and symmetry of any possible counterexample.
A \define{$k$-degenerate graph} is a graph that can be 
reduced to the empty graph by iteratively removing vertices 
of degree at most $k$.
The graph in Figure~\ref{fig:G7_bad+G7_good} is 3-degenerate, and 3-degenerate graphs can be drawn
Lombardi-style if we are willing to ignore vertex-vertex and vertex-edge overlaps.\footnote {Note that a drawing with vertex-vertex overlaps would still need to obey the perfect angular resolution constraints on the (possibly zero-length) edges.}
\maarten {Added this footnote in response to a reviewer comment; not sure if it actually makes things clearer.}
Consequently, if a 3-degenerate graph is to be a counterexample, we must show that all 
vertex orderings force two vertices to overlap.
Intuitively, 4-degenerate graphs should be more restrictive,
but the simplest 4-degenerate graph, $K_5$, nevertheless has a circular Lombardi drawing.
One issue is the fact that $K_5$ is extremely symmetrical.
Therefore, we shall modify this graph to break its symmetry.
We define our counterexample graph $G_8$ to be $K_5$ with the addition of three
degree-one vertices causing one of the vertices
of the original $K_5$ to have degree 5 and another to have degree 6, while
the other three remain with degree 4;
see Figure~\ref{fig:G8_simple}.

\tweeplaatjes[height=1.25in]{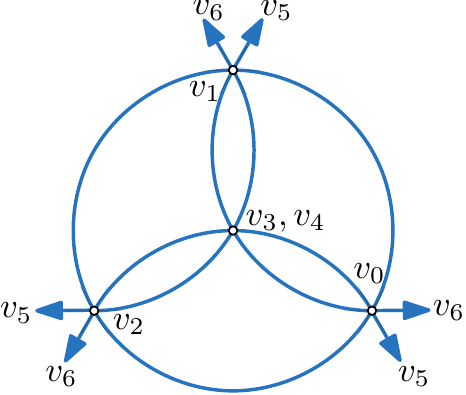} {G7_good} {A 7-vertex 3-degenerate graph that has no Lombardi drawing with the given edge ordering.
  (a) A M\"obius transformation makes triangle $v_0v_1v_2$ equilateral, 
  forcing vertices $v_3$ and $v_4$ to both be placed at the centroid and vertices $v_5$ and $v_6$ at the point at infinity; 
  (b) A different ordering that does provide a Lombardi drawing.}

Before we can establish our main theorem, we need to present
a few geometric properties related to Lombardi drawings.
\begin{property}[\cite{degkn-ldg-10}]
\label{prop:circleAngle}
Let $A$ be a circular arc or line segment connecting two points $p$ and $q$ that both lie on circle $O$. Then $A$ makes the same angle to $O$ at $p$ that it makes at $q$.
Moreover, for any $p$ and $q$ on $O$ and any angle $0 \le\theta\le \pi$, there exist either two arcs or a line segment and pair of collinear rays connecting $p$ and $q$, making angle $\theta$ with $O$, one lying inside and one outside of $O$.
\end{property}
\fi
\ifUseAppendix
\ifDeferToAppendix
We defer the proof of the next property, partially established in~\cite{degkn-ldg-10}, to the appendix.
\fi
\else
The next property was partially established in~\cite{degkn-ldg-10}.
\fi
\ifInAppendix
We now present the proof for Property~\ref{prop:arcLocus}.
\else
\begin{property}
\label{prop:arcLocus}
Suppose we are given two points $p=(p_x,p_y)$ and $q=(q_x,q_y)$ 
and associated
angles $\theta_{ph}$ and $\theta_{qh}$ and an angle $\theta_{pq}$.
Consider all pairs of circular arcs 
that leave $p$ and $q$ with angles $\theta_{ph}$ and $\theta_{qh}$
respectively (measured with respect to the positive horizontal axis)
and meet at an angle $\theta_{pq}$.
The locus of meeting points for these pairs of arcs is a circle.
Moreover, the circle has radius $r_c=d_{pq}\csc\alpha/2$ 
and center $(p_x + r_c \sin(\alpha+\beta), p_y - r_c \cos(\alpha + \beta))$.
where $\alpha = (\theta_{ph} - \theta_{qh} - \theta_{pq})/2$,
$\beta$ is the angle formed by the ray from $p$ through $q$
with respect to the positive horizontal axis, and $d_{pq}$ is the
distance between the points $p$ and $q$.
\end{property}
\fi
\ifDeferToAppendix
\else
\ifUseAppendix
\begin{proof}[Property~\ref{prop:arcLocus}]
\else
\begin{proof}
\fi
See~\cite{2010arXiv1009.0579D} for the initial details.
For simplicity at the moment, let us assume that $p$ and $q$
are aligned horizontally, that is $\beta=0$.
Let $C$ represent the circular locus with 
center $c=(x_c, y_c)$ and radius $r_c$.
From~\cite{2010arXiv1009.0579D} we know that the angle formed by
the center of the circle and the two points $q$ and $p$ is
$\angle qcp = \theta_{ph} - \theta_{qh} - \theta_{pq} = 2\alpha$.
Analyzing the isoceles triangle $\triangle QPC$, 
we determine the radius $r_c = d_{pq}/(2\sin\alpha)$.

Now, if $\beta \neq 0$, 
a simple rotation of $-\beta$ about $p$ can be applied yielding
$\alpha = \theta_{ph} - \beta - \theta_{qh} + \beta - \theta_{pq}$
and hence the angle $\alpha$ and the radius $r_c$ are unaffected.

Using basic trigonometry and geometry, 
we can also determine the center of this circle as 
$c=(p_x + r_c \sin(\alpha+\beta), p_y - r_c \cos(\alpha + \beta))$.
\end{proof}
\fi

\ifInAppendix
\else
\begin{theorem}
\label{thm:G8}
The graph $G_8$ is non-Lombardi.
\end{theorem}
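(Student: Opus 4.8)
The plan is to assume a Lombardi drawing of $G_8$ exists and derive a contradiction purely from the angular-resolution constraints, after first reducing away the three pendant edges. Label the five $K_5$-vertices $a,b,c,d,e$, where the added leaves raise $\deg a = 6$ and $\deg b = 5$ while $\deg c = \deg d = \deg e = 4$. A pendant leaf has degree one and hence carries no angular constraint of its own, and its incident arc can be drawn to any free point; thus the only effect of the three pendants is to force the tangent directions of the four $K_5$-edges at $a$ to occupy four of six equally spaced ($60^\circ$) slots, and at $b$ four of five ($72^\circ$) slots, while at $c,d,e$ the four edges must use all four $90^\circ$ slots. So it suffices to show that $K_5$ admits no single-arc drawing meeting these three incommensurate slot systems simultaneously.

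First I would eliminate the unknown arc curvatures. Writing $\tau_{uv}$ for the tangent direction at $u$ of the arc $uv$ and $\beta_{uv} = \arg(v-u)$, the tangent--chord form of Property~\ref{prop:circleAngle} (taking the chord line as the reference circle $O$) gives, for every edge, $\tau_{uv} + \tau_{vu} \equiv 2\beta_{uv} + \pi \pmod{2\pi}$, independent of the arc's radius. After this reduction the only remaining unknowns are the five vertex positions (through the $\beta$'s), a reference angle $\phi_v$ at each vertex, and the discrete slot assignments; each edge contributes exactly one such equation. Next I would normalize by a M\"obius transformation, which preserves circular arcs, angles, and therefore Lombardi drawings together with all the constraints above, fixing the three degree-four vertices $c,d,e$ at a convenient non-collinear position. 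Because $c,d,e$ each use all four $90^\circ$ slots, every edge of the triangle $cde$ must leave its endpoints along axis directions of the corresponding fixed frames, which pins down $\phi_c,\phi_d,\phi_e$ up to finitely many choices.

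Applying the invariant to the edges $ca, da, ea$ then expresses each direction $\beta_{ca}, \beta_{da}, \beta_{ea}$ from a fixed point toward $a$ as $\tfrac12(\phi_a + \text{slot})$ plus a known constant, so that $a$ must lie simultaneously on three rays emanating from $c,d,e$ that all rotate in lock-step as $\phi_a$ varies. By (the inscribed-angle reasoning behind) Property~\ref{prop:arcLocus}, the intersection of two of these rotating rays traces a circle, and requiring the third ray to pass through it admits only finitely many values of $\phi_a$; the same analysis locates $b$. Finally I would enumerate the finitely many admissible combinations of cyclic edge orderings and frame angles and show each is infeasible: either the three rays toward $a$ (or toward $b$) have no common intersection, or $a$, $b$, and a triangle vertex are forced to coincide, violating the requirement that the drawing place vertices at distinct points.

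The role of the asymmetry is decisive: for the fully symmetric all-degree-four $K_5$ the $90^\circ$ systems are mutually compatible (the known circular drawing), whereas the $60^\circ$ and $72^\circ$ requirements at $a$ and $b$ cannot be reconciled with the rigid $90^\circ$ triangle $cde$. I expect the main obstacle to be exactly this case analysis: because the edge ordering is not fixed, and because (as noted above) adding edges can relocate vertices, one must rule out every cyclic ordering rather than a single natural one, and the contradiction in each case rests on the incommensurability of $60^\circ$, $72^\circ$, and $90^\circ$.
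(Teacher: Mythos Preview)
Your overall framework matches the paper's: normalize the three degree-four vertices by a M\"obius transformation, use the tangent--chord identity $\tau_{uv}+\tau_{vu}\equiv 2\beta_{uv}+\pi$ to turn arc constraints into angular equations, recover the positions and reference angles of the two high-degree vertices via the circle loci of Property~\ref{prop:arcLocus}, and then exhaust the finitely many cyclic orderings. Your ``rotating rays'' picture is just a reparametrization (by $\phi_a$) of the paper's direct intersection of the circles $C_{01}$ and $C_{02}$.

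There is, however, a real gap in the last step. You list only two ways a case can fail---the three rays toward $a$ (or $b$) not meeting, or two vertices being forced to coincide---and omit the one that actually does the work. Once the triangle twists are fixed, the three constraints from $ca,da,ea$ use up exactly the three degrees of freedom in $(\phi_a,\text{pos}(a))$: the three loci always share a common point (since the pairwise meeting angles at $a$ are additive, the third circle passes through the intersection of the first two automatically), so your contradiction~(1) is generically void, and there is no reason to expect contradiction~(2) to hold in every ordering. The over-determining equation is the tenth edge, $ab$: with $a,b,\phi_a,\phi_b$ already pinned down, the identity $\theta_{34}+\theta_{43}=\pi+2\beta_{34}$ (your $\tau_{ab}+\tau_{ba}=2\beta_{ab}+\pi$) is a nontrivial equality between known quantities, and \emph{this} is what the paper checks and finds to fail in every ordering. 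You need to add exactly this check.

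A secondary point: you correctly anticipate that the case analysis is the main obstacle, but you may be underestimating it. Even after symmetry reductions the number of orderings is large enough that the paper does not attempt a by-hand argument; it derives closed-form expressions for $v_3,v_4,t_3,t_4$ and then verifies the failure of the $v_3v_4$ equation for every ordering with a short computer program. Your ``incommensurability of $60^\circ,72^\circ,90^\circ$'' intuition is the right heuristic for why one expects failure, but it is not a substitute for the enumeration.
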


\begin{proof}
Let $v_0, v_1, v_2$ be the three vertices of $G_8$ with degree four.
Let $v_3$ and $v_4$ be the vertices with degree five and six 
respectively.
We do not care about the final placement of the degree-one vertices,
whose main purpose is to alter the angular resolution of $v_3$
and $v_4$.
Using a M\"obius transformation we can assume that the first
three vertices $v_0$, $v_1$, and $v_2$ are placed
on the corners of a unit equilateral triangle such that $v_0$ and $v_1$ have positions $(0,0)$ and $(1,0)$ respectively.
We shall show that for every edge ordering, the
two vertices $v_3$ and $v_4$ cannot both be placed to
maintain correctly their angular resolution and be connected to
each other.
We do this by establishing the algebraic equations for their positions
based on the edge orderings of all vertices.
We then show that such a set of equations has no solution for
any valid assignment of orderings.

We first establish a notation for representing a specific edge
ordering.
For every vertex $v_i$ with neighbor $v_j$, let $k_{ij}$ 
represent the counterclockwise
cyclic ordering of edge $(v_i,v_j)$ about $v_i$
with $k_{01}=0$ and $k_{i0}=0$ for $i > 0$.
For example, in Figure~\ref{fig:G8_simple}, the edge ordering
around $v_4$ has $k_{41}=2$, $k_{42}=4$, $k_{43}=5$,
$k_{46}=1$, and $k_{47}=3$.
The \define{twist} $t_i$ of a vertex $v_i$ is the angle made by the 
arc extending from $v_i$ to the neighbor $v_j$ with $k_{ij}=0$.
From the initial placement of $v_0$, $v_1$, and $v_2$ on an 
equilateral triangle and their 
respective edge orderings, we can uniquely determine the twists for 
each of these vertices;
see Figure~\ref{fig:twistComp}.
Since the three vertices lie on an equilateral triangle, the
tangents to the circle defined by the three points also form an
equilateral triangle.
From Property~\ref{prop:circleAngle}, the angles formed by the
arcs connecting each pair of vertices to the tangents at the circle
yield matching (but undetermined) angles, labeled $a$, $c$, and $e$.
The angles $b$, $d$, and $f$ are determined uniquely by the edge 
orderings as follows:
\ifProceeding
\begin{equation}
b = 2\pi - k_{02} \pi/2, \hspace{2em}
d = k_{12} \pi/2, \hspace{2em}
f = 2\pi - k_{21} \pi/2\label{eqn:twF}
\end{equation}
\else
\begin{align}
b & = 2\pi - k_{02} \pi/2\label{eqn:twB}\\
d & = k_{12} \pi/2\label{eqn:twD}\\
f & = 2\pi - k_{21} \pi/2\label{eqn:twF}
\end{align}
\fi
Noting that certain triplets of angles yield a value of $\pi$,
we have the following three equations on three unknowns:
\ifProceeding
$a + b + c = \pi + 2i_0\pi$,
$c + d + e = \pi + 2i_1\pi$, and
$e + f + a = \pi + 2i_2\pi$.
\else
\begin{align}
a + b + c & = \pi + 2i_0\pi\\
c + d + e & = \pi + 2i_1\pi\\
e + f + a & = \pi + 2i_2\pi.
\end{align}
\fi
Solving for $a$ yields: $2a = \pi - f - b + d + 2(i_0 - i_1 + i_2)\pi$.
For the twist for $v_0$, we wish to know the value of $x$,
the angle for the arc from $v_0$ to $v_1$.
Noting that $x = a + b + 2\pi/3 - 2i_0\pi$ and substituting in 
\ifProceeding
Equation (\ref{eqn:twF})
\else
Equations (\ref{eqn:twB}-\ref{eqn:twF}) 
\fi
yields 
$t_0 = x = 7\pi/6 + \pi (k_{12} + k_{21} - k_{02})/4 + (i_2 - i_0 - i_1)\pi$.
Noting that $t_0 + c + \pi/3 = 2\pi$ yields $t_1 = \pi - t_0$.
Similarly, $t_2 = \pi - a = 5\pi/3 - t_0 - k_{02}\pi/2 + 2\pi(1-i_0)$.

\tweeplaatjes[height=1.25in] {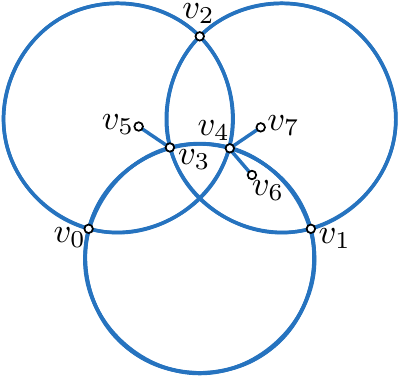} {twistComp}
{(a) $G_8$ with $K_5$ part drawn Lombardi-style and additional
edges shown. (b) Computing the twist for the three vertices $0$, $1$, and $2$.
The twist for vertex $0$ is $x$.}

The positions and orienting twists of the first three vertices also 
yield a unique position and twist for vertices $v_3$ and $v_4$.
After determining these values, we shall show that in all orderings
it is not possible to connect $v_3$ to $v_4$ with a single circular
arc while still maintaining the proper angular resolution.

From Property~\ref{prop:arcLocus}, $v_3$ must lie on a circle $C_{01}$
defined by the neighbors $v_0$ and $v_1$ and their corresponding
arc tangents.
Similarly, it must lie on circles $C_{02}$ and $C_{12}$.
The intersection of these three circles determines the position 
and orientation of $v_3$.
Let us proceed to determine $C_{01}$.
Letting $p=v_0$ and $q=v_1$,
we have $\theta_{ph}=t_0+\pi k_{03}/4$ and 
$\theta_{qh}=t_1 + \pi k_{13}/4$ and 
$\theta_{pq}=\pi (k_{31}-k_{30})/5 = \pi k_{31}/5$.
From Property~\ref{prop:arcLocus} and the fact that $d_{pq}=1$, 
we can determine that $C_{01}$ has radius
$r_{01} = \csc\alpha_{01}/2$ and center
$c_{01} = (r_{01} \sin\alpha_{01}, -r_{01}\cos\alpha_{01}) =
(1/2, -\cot\alpha_{01}/2)$
with $\alpha_{01} = (\theta_{ph}-\theta_{qh}-\theta_{pq})/2 =
t_0 - \pi/2 + \pi(5k_{03} - 5k_{13} - 4k_{31})/40$.
Similarly, $C_{02}$ has radius $r_{02}=\csc\alpha_{02}/2$
and center $c_{02} = (r_{02}\sin(\alpha_{02}+\pi/3),
-r_{02}\cos(\alpha_{02}+\pi/3))$ with 
$\alpha_{02} = t_0 - 5\pi/6 + \pi(5k_{03} + 10k_{02} - 5k_{23} - 4k_{32})/40 + (i_0-1)\pi$.

Given the circles and the position of $v_0$ at the origin,
it is easy to determine the intersection of the two circles,
one of which is $v_0$ and the other, if it even exists,
must be $v_3$.
Since $v_0$ must lie on the intersection, the line from $v_0$
to $v_3$ is perpendicular to the line, $\ell$, through the two
centers.
Moreover, $v_3$ is the reflection of $p$ about $\ell$.
Thus, letting $\vec{v} = (v_x, v_y) = c_{02} - c_{01}$,
$\vec{c} = v_0 - c_{01} = -c_{01}$, and
$\vec{v}^\perp = (-v_y, v_x)$ yields
\ifProceeding
$v_3 = \frac{-2\vec{c} \cdot \vec{v}^\perp}{\vec{v}\cdot\vec{v}}\vec{v}^\perp.$
\else
\begin{align}
v_3 & = \frac{-2\vec{c} \cdot \vec{v}^\perp}{\vec{v}\cdot\vec{v}}\vec{v}^\perp.
\end{align}
\fi
To establish the twist $t_3$ at $v_3$ we observe from 
Property~\ref{prop:circleAngle} that the angle 
$\alpha$ formed by the line $\ell_{03}$ from $v_0$ to $v_3$
and the tangent of the curve from $v_0$ to $v_3$ is the same as 
the tangent of the curve from $v_3$ to $v_0$ and the line $\ell_{03}$.
Moreover, $\theta_{03} = t_0 + k_{03}\pi/4 = \alpha + \beta_{03}$
and $t_3 = \theta_{30} = \pi - \alpha + \beta_{03}$ where
$\beta_{03} = \arctan(v_3(y)/v_3(x))$ is the slope of $\ell_{03}$.
From this, we can deduce that 
$t_3 = \pi - t_0 - k_{03}\pi/4 + 2\beta_{03}$.
The exact same calculations can be used to compute $v_4$ and $t_4$.

As with the twists for $t_3$ and $t_4$, we can use 
Property~\ref{prop:circleAngle} to determine the angles formed by
the arc from $v_3$ to $v_4$ given their positions and twists.
We know that the angles of the tangents to the arc
at $v_3$ and $v_4$ are $\theta_{34} = t_3 + k_{34}\pi/5$
and $\theta_{43} = t_4 + k_{43}\pi/6$ respectively.
Letting $\beta_{34}= \arctan((v_4(y) - v_3(y))/(v_4(x) - v_3(x)))$ 
be the slope of the line from $v_3$ to $v_4$, we have
that $\theta_{34} - \beta_{34} = \alpha$ and
$\pi-\alpha = \theta_{43} - \beta_{34}$.
Consequently, we have
\begin{align}
\theta_{34} + \theta_{43} & = \pi + 2\beta_{34}\label{eqn:contra}.
\end{align}

Each specific edge ordering therefore yields a unique 
set of positions and twists for $v_3$ and $v_4$ as outlined above.
To show that no Lombardi drawing is possible one must
simply show that Equation~\ref{eqn:contra} does not hold
for {\em any} edge ordering.
Though there are a finite number of possible orderings and
though symmetries could be used to reduce that number, the
individual case analysis for such a proof appears to be quite
unwieldy.
Instead, we simply iterate over every possible edge ordering,
applying these equations to a numerical algorithm
that searches for a valid non-contradictory assignment.
The Python code for this program is shown 
in 
\ifUseAppendix
the Appendix in
\fi
Table~\ref{code:nonLombardiCode}.
By running this program, one can see that no valid assignments are
possible concluding our proof.
\end{proof}

\begin{corollary}
\label{cor:infNon}
There are an infinite amount of connected non-Lombardi graphs.
\end{corollary}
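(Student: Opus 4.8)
The plan is to bootstrap from the single counterexample $G_8$ of Theorem~\ref{thm:G8} to an infinite family, by attaching pendant structure that does not disturb the angular-resolution constraints driving the impossibility argument. Concretely, for each $n \ge 0$ I would let $H_n$ be the graph obtained from $G_8$ by attaching a path of $n$ additional edges (through $n$ new vertices) to one of its three degree-one vertices. Each $H_n$ is connected, and since $|V(H_n)| = |V(G_8)| + n$ the graphs $H_n$ are pairwise non-isomorphic; it therefore suffices to show that every $H_n$ is non-Lombardi.

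The key step --- and the one requiring care, in light of the earlier observation that a supergraph of a non-Lombardi graph need not itself be non-Lombardi --- is to verify that any Lombardi drawing of $H_n$ would restrict to a Lombardi drawing of $G_8$. The point is that attaching a path to a leaf leaves the degrees of the five core vertices $v_0, v_1, v_2, v_3, v_4$ unchanged, at $4,4,4,5,6$. Hence in any Lombardi drawing of $H_n$ the tangents of the arcs incident to each $v_i$ are spaced at exactly the same angle $2\pi/\deg(v_i)$ that they must have in $G_8$, and the edges of the embedded $K_5$ together with the three original pendant edges are drawn as circular arcs meeting precisely these constraints. Deleting the attached path --- which only lowers the degree of the single affected leaf, a vertex at which the angular-resolution requirement is vacuous --- therefore yields a valid Lombardi drawing of $G_8$.

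This restriction contradicts Theorem~\ref{thm:G8}, so no Lombardi drawing of $H_n$ can exist for any $n$, and $\{H_n\}_{n\ge 0}$ is an infinite family of connected non-Lombardi graphs. The main obstacle I anticipate is exactly the justification that deleting the pendant path preserves the Lombardi property; this rests on the fact that the obstruction in Theorem~\ref{thm:G8} is entirely local to the degrees, and hence the angular resolutions, of $v_0,\dots,v_4$, so any modification confined to the leaves is invisible to it. Once that is in hand, the distinctness of the vertex counts makes the infinitude immediate.
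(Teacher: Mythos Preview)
Your argument is correct. Attaching a pendant path to a leaf of $G_8$ leaves the degrees of $v_0,\dots,v_4$ unchanged, so in any hypothetical Lombardi drawing of $H_n$ the arcs among these five vertices (together with the three original pendant arcs) already satisfy exactly the constraints that Theorem~\ref{thm:G8} shows to be unsatisfiable; deleting the path only drops the affected leaf back to degree one, where the angular-resolution condition is vacuous. The infinitude then follows from the distinct vertex counts.

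The paper takes a different but closely related route: rather than growing a single copy of $G_8$ by a pendant path, it glues together arbitrarily many disjoint copies of $G_8$ by identifying pairs of degree-one vertices (from different copies, so they have no common neighbour) into degree-two vertices. The reduction then goes the other way: a Lombardi drawing of the glued graph can be \emph{split} at any such degree-two vertex~$w$, because the $180^\circ$ condition there means the two incident arcs meet smoothly and one may place the two original leaves anywhere along them, recovering Lombardi drawings of the individual copies of $G_8$. Both arguments exploit the same structural fact---modifications confined to the leaves of $G_8$ are invisible to the obstruction, which lives entirely at $v_0,\dots,v_4$---but yours reaches it by deletion while the paper reaches it by splitting. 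Your construction is a bit more direct; the paper's has the minor advantage of producing graphs that are not merely $G_8$ with a tail, and of isolating a reusable lemma (merging two leaves into a degree-two vertex never creates Lombardi-drawability).
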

\fi

\ifDeferToAppendix
See the Appendix for details on the proof.
\else
\ifUseAppendix
The following is the proof of Corollary~\ref{cor:infNon}:
\begin{proof}[Corollary~\ref{cor:infNon}]
\else
\begin{proof}
\fi
Let $G$ be formed from a graph $G'$, having at least two degree-one 
vertices $u$ and $v$ that do not share a common neighbor,
by merging $u$ and $v$ and
creating a degree-two vertex $w$.
If $G$ is Lombardi, then so is $G'$ as we can take a Lombardi
drawing of $G$, split $w$, place $u$ and $v$ on the arcs
between $w$ and its respective neighbor, and still maintain
a valid Lombardi drawing.
Thus, we can take any collection of disjoint copies of $G_8$
and combine degree-one vertices 
to form a connected non-Lombardi graph.
\end{proof}
\fi

\ifDeferToAppendix
\else
\ifInAppendix
Table~\ref{code:nonLombardiCode} presents the complete 
code in Python for testing that no edge orderings of $G_8$ have a
valid assignment.
\fi
\begin{table}
\scriptsize
\verbatiminput{nonLombardi.py}
\caption{Python code to verify $G_8$ is non-Lombardi}
\label{code:nonLombardiCode}
\end{table}
\fi

\subsection{Smooth 2-Lombardi Drawings}
If we want to draw Lombardi-style drawings for any given graph we have
to relax one of the two requirements that specify Lombardi drawings.
Ideally, we would like to avoid
relaxing the requirement that edges have perfect angular resolution.
Fortunately, we can achieve a 
Lombardi methodology for drawing any graph if
we allow two circular arcs per edge.

\maarten {I downgraded this to a corollary, given complaints from all reviewers.}

We recall from Duncan {\it et al.}~\cite[Theorem 3]{degkn-ldg-10}:
\textit {Every $2$-degenerate graph with a specified cyclic ordering of the edges around each vertex has a Lombardi drawing.}

\begin{corollary}
Every graph has a smooth 2-Lombardi drawing.
Furthermore, the vertices can be chosen to be in any fixed position.
\end{corollary}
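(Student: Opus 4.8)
The plan is to reduce the statement to a per-edge application of Property~\ref{prop:arcLocus}, after turning every edge into a path of length two. First I would subdivide each edge $e=(u,v)$ of $G$ by inserting a single dummy vertex $w_e$, obtaining a graph $G'$ in which every $w_e$ has degree exactly two and the original vertices keep their degrees. The crucial observation is that a $1$-Lombardi drawing of $G'$ is the same thing as a $2$-Lombardi drawing of $G$: the two arcs incident to $w_e$ become the two arcs of the edge $e$. Moreover, perfect angular resolution at a degree-two vertex forces its two incident arcs to meet at angle $\pi$, i.e.\ to share a common tangent line, so the edge $e$ is automatically continuously differentiable at $w_e$. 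Hence any Lombardi drawing of $G'$ yields a \emph{smooth} $2$-Lombardi drawing of $G$, and it suffices to produce a Lombardi drawing of $G'$ in which the original vertices sit at their prescribed positions.

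To build such a drawing I would place the original vertices at the desired points and, at each original vertex of degree $d$, fix $d$ tangent directions that are equally spaced around it (perfect angular resolution), choosing the overall rotation, or twist, of this fan freely. With the endpoint positions and endpoint tangents of every edge now determined, each edge can be drawn independently. Consider one edge $e=(u,v)$: its two arcs must leave $u$ and $v$ along the chosen tangent directions and meet smoothly at $w_e$, that is, at meeting angle $\theta_{pq}=\pi$. Property~\ref{prop:arcLocus} tells us that the set of admissible positions for $w_e$ is an entire circle; in particular it is nonempty, so I can place $w_e$ at any of its points and read off the two connecting arcs. Because distinct edges use distinct dummy vertices and $2$-Lombardi drawings need not be planar, there is no interaction between the edges and crossings are irrelevant.

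This argument is, in effect, the degree-two insertion step underlying the recalled theorem of Duncan et al.\ on $2$-degenerate graphs: subdividing makes every new vertex have degree two, and a degree-two vertex can always be realized by a biarc whose joint lies on the Property~\ref{prop:arcLocus} circle. The one point that needs care --- and the only real obstacle --- is that the locus circle degenerates (its radius blows up, and it becomes a line) exactly when $\alpha=(\theta_{ph}-\theta_{qh}-\pi)/2$ is a multiple of $\pi$, i.e.\ when the two prescribed endpoint tangents are antiparallel in a particular way. I would dispose of these non-generic configurations by using the freedom in the twist: the rotation of each vertex's fan of directions is unconstrained, so an arbitrarily small perturbation of the twists moves every edge off its degenerate angle while preserving perfect angular resolution and the fixed vertex positions. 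Even in the degenerate case the biarc still exists, with its joint free to slide along a line, so strictly speaking no perturbation is needed; it is simply cleaner to avoid the limiting case.
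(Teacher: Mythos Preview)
Your proposal is correct and follows the same high-level route as the paper: subdivide every edge, observe that a Lombardi drawing of the subdivided graph is precisely a smooth $2$-Lombardi drawing of the original, and then argue that the degree-two dummy vertices can always be placed. The only difference is one of packaging: the paper invokes the cited $2$-degenerate theorem as a black box and then observes that the original vertices can be ordered first and hence placed freely, whereas you unpack that step and argue directly from Property~\ref{prop:arcLocus} that each dummy vertex has a whole circle (or line) of valid positions once the endpoint tangents are fixed. Your version is slightly more self-contained; the paper's is shorter because it leans on the prior result.
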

\begin{proof}
Starting
with the given graph $G$, subdivide every edge by dividing it in two
and adding a ``dummy'' vertex incident to these two new edges.
By the above theorem, there exists a Lombardi drawing of the resulting $2$-degenerate graph $G_2$.
Furthermore, each dummy vertex of $G_2$ that was added to subdivide an edge
of $G$ has degree $2$; hence, in a Lombardi drawing of $G_2$ the
edges incident on each such dummy vertex have tangents that meet at 180
degrees.
Thus, when we consider these two circular arcs of $G_2$ as a single edge of
$G$ they define a smooth two-arc edge.
See Figure~\ref{fig:G8_2}.

The $2$-degenerate drawing algorithm orders the vertices in such a way that each vertex has at most two earlier neighbors; it places vertices with zero or one previous neighbor freely, but vertices with two previous neighbors are constrained to lie on a circular arc. For $G_2$, we can choose an ordering in which only the dummy vertices have two previous neighbors; therefore, the vertices of $G$ can have any initial placement.
\end{proof}

As Figure~\ref{fig:K4_2} illustrates,
although we can place the vertices in any position with any initial orientation,
an arc's smooth bend point might be an inflection point.

\tweeplaatjes[height=1.25in]{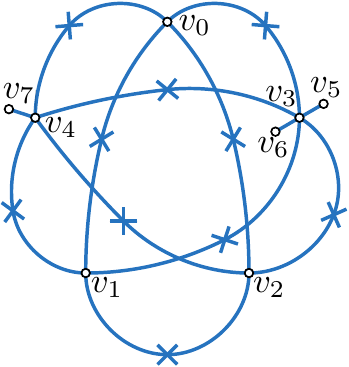} {K4_2} {(a) An example 2-Lombardi drawing of $G_8$.  The bend points (not all of which
are necessary) are shown with crossed marks.
(b) An example 2-Lombardi drawing of $K_4$ with the vertices placed on a line
and tangents oriented to force numerous inflection points.}

\section{Planar $k$-Lombardi Drawings}

In this section, we investigate \emph {planar} (non-crossing) Lombardi drawings and planar $k$-Lombardi drawings.

\subsection{A planar 3-tree with no planar Lombardi drawing}

It is known that planar graphs do not necessarily have planar Lombardi drawings.
For example, 
Duncan {\it et al.}~\cite{degkn-ldg-10}
show that the nested triangles graph must have edge crossings whenever there are 4 or more levels of nesting. While this graph is 4-degen\-er\-ate, even more constrained classes of planar graphs have no planar Lombardi drawings. Specifically, we can show that there exists a planar 3-tree that has no planar Lombardi realization. The planar 3-trees, also known as Apollonian networks, are the planar graphs that can be formed, starting from a triangle, by repeatedly adding a vertex within a triangular face, connected to the three triangle vertices, subdividing the face into three smaller triangles. These graphs have attracted much attention within the physics research community both as models of porous media with heterogeneous particle sizes and as models of social networks~\cite{AndHerAnd-PRL-05}. In addition, 3-trees are relevant for Lombardi drawings because they are examples of 3-degenerate graphs, which have nonplanar Lombardi drawings if vertex-vertex and vertex-edge overlaps are allowed.

\eenplaatje[width=3in]{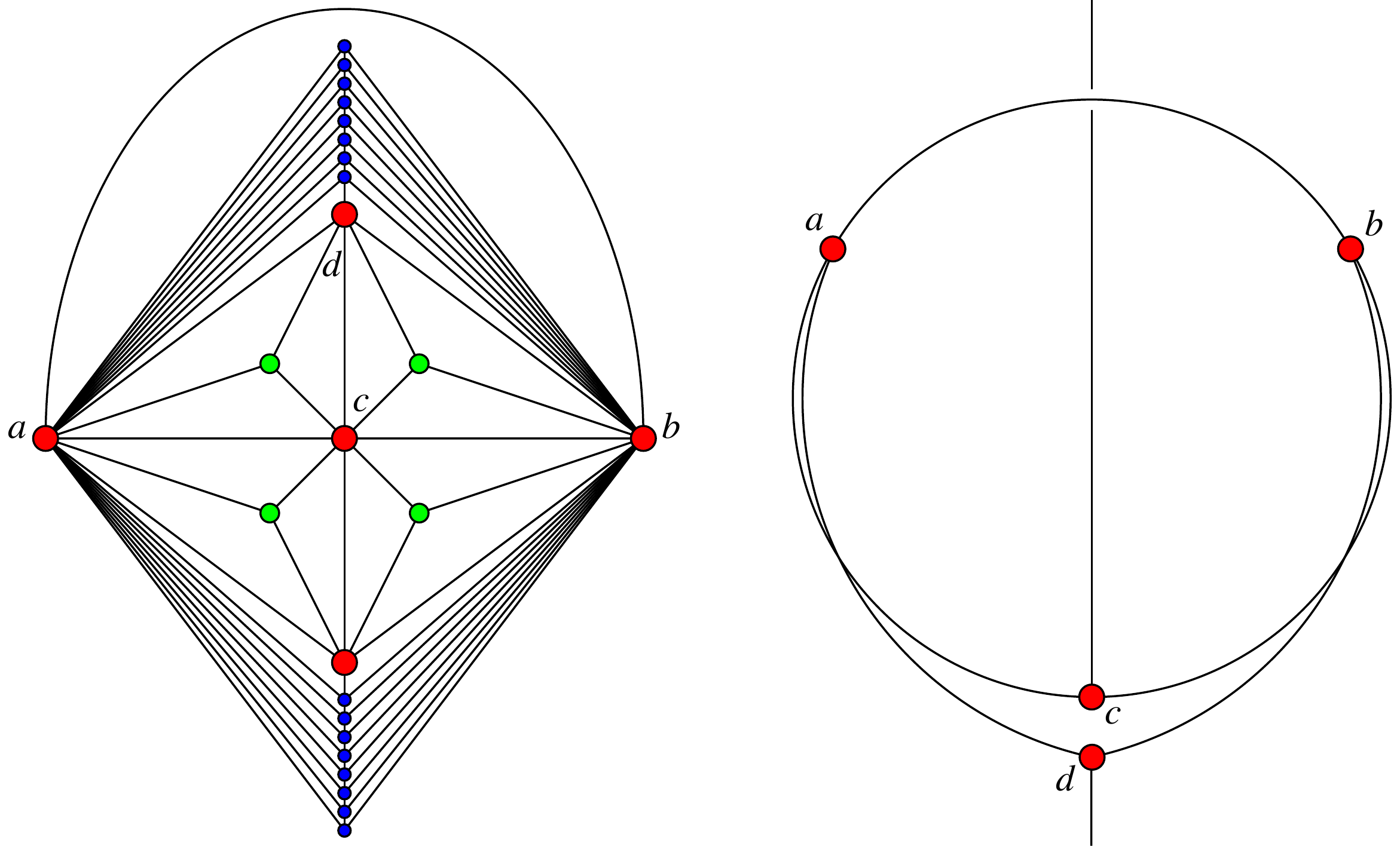}{Left: A planar 3-tree that has no planar Lombardi drawing. Right: For the $K_4$ subgraph defined by the four vertices $a$, $b$, $c$, and $d$, a drawing with the correct angles at each vertex will necessarily have crossings.}

\begin {theorem}
There exists a planar 3-tree that has no planar Lombardi drawing.
\end {theorem}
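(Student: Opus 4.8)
The plan is to exhibit one concrete Apollonian network $T$ and show that the angular-resolution constraints it forces on a single $K_4$ subgraph, on vertices $a,b,c,d$ with $d$ interior to triangle $abc$, are incompatible with any crossing-free drawing. First I would fix the combinatorial structure. A planar $3$-tree on at least four vertices is $3$-connected, so its planar embedding is unique up to reflection and choice of outer face; hence in any planar Lombardi drawing the cyclic order of edges around every vertex is determined. I would build $T$ by starting from the $K_4$ on $a,b,c,d$ and then repeatedly subdividing the triangular faces incident to these four vertices so as to raise their degrees to prescribed values $\delta_a,\delta_b,\delta_c,\delta_d$ without disturbing the central $K_4$. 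With the rotation system fixed, perfect angular resolution pins the tangent direction of each edge leaving one of these vertices to a grid of the form $t_v + k\,(2\pi/\delta_v)$, where $t_v$ is the (as yet unknown) twist at $v$ and the integer $k$ is read directly off the cyclic order. In particular the three $K_4$-edges at each of $a,b,c,d$ occupy known grid slots, so their pairwise tangent angles are fixed constants once the twists are known.

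Next I would normalize. Since Möbius transformations preserve circles, lines, and angles, they carry Lombardi drawings to Lombardi drawings, so I may place $a,b,c$ on the corners of a unit equilateral triangle exactly as the vertices $v_0,v_1,v_2$ were placed in the proof of Theorem~\ref{thm:G8}. Property~\ref{prop:circleAngle} then relates the arcs $ab,bc,ca$ to the circle $O$ through $a,b,c$ symmetrically, and, combined with the fixed edge orderings, determines the twists $t_a,t_b,t_c$ uniquely, just as $t_0,t_1,t_2$ were computed there. With the twists in hand, the tangent directions of $da,db,dc$ at $a,b,c$ become explicit, while the meeting angle at $d$ is fixed by $\delta_d$; applying Property~\ref{prop:arcLocus} to the pairs $(a,b)$, $(b,c)$, and $(c,a)$ produces three locus circles $C_{ab},C_{bc},C_{ca}$, and any valid position for $d$ must be a common intersection point of all three.

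The crux is to show that this over-determined configuration can never be realized without a crossing. I would argue that, for the chosen degrees, the three locus circles either fail to share a common intersection point in the interior of the curvilinear triangle $abc$ — in which case no admissible placement of $d$ exists at all — or their common point places $d$ so that one of the arcs $da,db,dc$ must leave the interior of $abc$ and cross one of its sides. Either alternative rules out a planar Lombardi drawing, and this is precisely the $K_4$ obstruction depicted on the right of the accompanying figure.

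I expect the main obstacle to be establishing this final step uniformly across all admissible embeddings. As in the proof of Theorem~\ref{thm:G8}, the number of compatible edge orderings is finite but a purely hand-driven case analysis is unwieldy; I would therefore pin down explicit values of $\delta_a,\delta_b,\delta_c,\delta_d$ and then certify, by the same kind of finite numerical search automated for $G_8$, that over every compatible ordering the required concurrency of $C_{ab},C_{bc},C_{ca}$ together with the containment of $d$ inside $abc$ never hold simultaneously, which yields the contradiction and completes the proof.
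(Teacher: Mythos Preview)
Your plan diverges from the paper's proof in its central idea. The paper does not attempt a $G_8$-style enumeration over edge orderings; instead it builds the $3$-tree \emph{symmetrically}, placing equal numbers of subdivision vertices on the two sides of the cycle $ab,bc,ca$ so that the two halves of the graph are isomorphic. That symmetry forces the three arcs $ab$, $bc$, $ca$ to meet each other at $180^\circ$ at each of $a,b,c$ in any Lombardi drawing, so they necessarily lie on a common circle. After a M\"obius transformation sending this circle to itself with $a,b,c$ equally spaced, the position of $d$ is determined uniquely by the $120^\circ$ angles at $d$ and the prescribed tangent angles at $a,b,c$; by taking sufficiently many subdivision vertices the angle between $ad$ and $ab$ is pushed arbitrarily close to $180^\circ$, which forces that unique location of $d$ to fall \emph{outside} the circle through $a,b,c$ and hence to create crossings. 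The argument is short, geometric, and needs no case search or code.

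Your computational route is not unreasonable in spirit, but as written it is only a plan rather than a proof: you neither fix the degrees $\delta_a,\dots,\delta_d$ nor carry out the verification, and you give no structural reason to expect that some particular choice will actually produce the obstruction. The paper's contribution is precisely identifying \emph{which} constraint to impose---symmetry across the triangle, plus high degree in the wedge between $ab$ and $ad$---so that the failure of planarity becomes a one-line geometric observation instead of the output of a search. Without an analogous guiding idea, your proposal is a sketch of how one might look for a counterexample rather than a demonstration that one exists; and even if carried out, it would also have to handle the choice of outer face, which the paper's symmetry sidesteps.
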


\begin {proof}
An example of a planar 3-tree that has no planar Lombardi drawing is given in Figure~\ref{fig:apollonian}; in the figure, sixteen small blue vertices are shown, but our construction requires a sufficient number (which we do not specify precisely) in order to force the angle between arcs $ad$ and $ab$ to be arbitrarily close to $180^\circ$. 
The numbers of blue vertices on the top and bottom of the figure should be equal.
Because of this equality, the three arcs $ab$, $bc$, and $ca$ split the graph into two isomorphic subgraphs, and due to this symmetry they must meet at $180^\circ$ angles to each other, necessarily forming a circle in any Lombardi drawing. By performing a M\"obius transformation on the drawing, we may assume without loss of generality that these three points form the vertices of an equilateral triangle inscribed within the circle, as shown in the right of the figure. Then, according to our previous analysis of 3-degenerate Lombardi graph drawing, there is a unique point in the plane at which vertex $d$ may be located so that the arcs $ad$, $bd$, and $cd$  form the correct $120^\circ$ angles to each other and the correct angles to the three previous arcs $ab$, $bc$, and $ca$. However, as shown on the right of the figure, that unique point lies outside circle $abc$ and causes multiple edge crossings in the drawing.
\end {proof}


\subsection{Planar $2$-Lombardi drawings for planar max-degree $3$ graphs}

We will show that planar graphs of maximum degree $3$ allow for smooth planar $2$-Lombardi drawings. 

\begin {lemma} \label {lem:threepoints}
Given a circle $C$ and three points $a$, $b$, and $c$ on it, there exists a point $p$ inside $C$ such that we can draw three edges from $p$ to $a$, $b$, and $c$  as circular arcs that are all perpendicular to $C$, and meet inside $p$ at $120^\circ$ angles.
\end {lemma}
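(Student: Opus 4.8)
The plan is to recognize the configuration as a problem in hyperbolic geometry and to exploit the conformal symmetries of the disk. I would regard $C$ as the ideal boundary (the absolute) of the Poincar\'e disk model of the hyperbolic plane. In this model the \emph{geodesics} are exactly the circular arcs (and diameters) that meet $C$ at right angles, so ``an arc from an interior point $p$ that is perpendicular to $C$ at $a$'' is precisely the hyperbolic geodesic ray from the interior point $p$ to the ideal point $a$; such a ray exists, is unique, and stays inside $C$. Hence the lemma is equivalent to the following statement: given three ideal points $a$, $b$, $c$ on $C$, there is an interior point $p$ from which the three geodesic rays toward $a$, $b$, and $c$ leave at mutual angles of $120^\circ$.

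First I would reduce to a symmetric configuration. The orientation-preserving automorphisms of the disk are the M\"obius transformations of the form $z \mapsto e^{i\theta}\,(z-w)/(1-\bar w z)$ with $|w|<1$; they carry $C$ to itself, send geodesics to geodesics, and are conformal, so they preserve both the right angles along $C$ and the angles between arcs at any interior point. Since the full M\"obius group is sharply $3$-transitive on the sphere, and a transformation carrying one positively oriented boundary triple to another positively oriented boundary triple must preserve the disk (rather than exchange it with its exterior), these disk automorphisms act transitively on cyclically ordered triples of points of $C$. I would therefore pick such an automorphism $\mu$ carrying $a$, $b$, $c$ to the three cube roots of unity $1$, $\omega$, $\omega^2$, which are equally spaced on $C$.

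For this symmetric target the center $O$ of the disk is the desired interior point: the three radii from $O$ to $1$, $\omega$, $\omega^2$ are straight segments, each perpendicular to $C$ at its endpoint, and by the threefold rotational symmetry they meet pairwise at $120^\circ$. Pulling back, $p = \mu^{-1}(O)$ is an interior point (automorphisms preserve the interior), and the $\mu^{-1}$-images of the three radii are circular arcs or segments from $p$ to $a$, $b$, $c$. Because $\mu^{-1}$ is conformal and fixes $C$ setwise, these three arcs still meet $C$ perpendicularly and still meet at $p$ at $120^\circ$ angles, which proves the lemma.

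The main obstacle, and essentially the only point requiring care, is the transitivity claim for the disk automorphisms on boundary triples: one must check that matching the cyclic orientation of the source and target triples guarantees that the unique sphere M\"obius map between them indeed preserves the disk. Everything else---the identification of perpendicular arcs with geodesics, the conformality and angle-preservation of $\mu$ and $\mu^{-1}$, and the symmetric base case at the center---is standard and requires no computation.
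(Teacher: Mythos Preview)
Your proof is correct and follows essentially the same approach as the paper: apply a disk-preserving M\"obius transformation to send $a,b,c$ to three equally spaced points on $C$, use the three radii from the center, and pull back by conformality. The only difference is that you make the hyperbolic-geometry interpretation explicit and justify more carefully the existence of the required disk automorphism, whereas the paper simply asserts it.
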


\begin {proof}
We can find a M\"obius transform $\tau$ that maps the circle to itself, mapping $a$, $b$, and $c$  to three points $a'$, $b'$ and $c'$ that are $120^\circ$ apart on the circle. For these three points, the three edges can be drawn as radii of the circle meeting at the center point $p'$. The inverse transformation to $\tau$ maps $p'$ to $p$ and maps these three radii to circular arcs with the desired property.
\end {proof}

\begin {theorem}
Every planar graph with maximum degree three has a planar smooth $2$-Lombardi drawing. 
\end {theorem}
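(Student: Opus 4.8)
The plan is to realize the drawing through a primal--dual circle packing of the planar dual of $G$, applying Lemma~\ref{lem:threepoints} once at each vertex. Assume first that $G$ is $3$-regular and $3$-connected, so that its dual $G^*$ is a simple $3$-connected planar triangulation: the faces of $G$ become the vertices of $G^*$, and the three faces surrounding a vertex $v$ of $G$ form a triangular face of $G^*$. By the Koebe--Andreev--Thurston theorem, $G^*$ admits a primal--dual circle packing, giving a \emph{face circle} $C_f$ for each face $f$ of $G$, with $C_f$ and $C_{f'}$ externally tangent exactly when $f$ and $f'$ share an edge, together with a \emph{vertex circle} $C_v$ for each vertex $v$ of $G$ that passes through the three tangency points of the three face circles around $v$ and meets each of those face circles orthogonally.

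First I would place the vertices and choose the bend points. For each degree-$3$ vertex $v$, apply Lemma~\ref{lem:threepoints} to the circle $C_v$ and the three tangency points $t_{e_1}, t_{e_2}, t_{e_3}$ on it (one for each incident edge $e_i$), obtaining a point $p_v$ inside $C_v$ together with three arcs from $p_v$ to these points that are perpendicular to $C_v$ and meet at $120^\circ$. Each edge $e=(v,w)$, bordering faces $f$ and $f'$, is then drawn as the concatenation of the arc from $p_v$ to $t_e$ and the arc from $p_w$ to $t_e$, where $t_e$ is the tangency point of $C_f$ and $C_{f'}$. This uses two arcs per edge and gives perfect $120^\circ$ resolution at every vertex by construction.

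The crux of the argument is smoothness at each bend point $t_e$. Since $C_f$ and $C_{f'}$ are tangent at $t_e$, they share a tangent line $L$ there; because both $C_v$ and $C_w$ meet $C_f$ orthogonally at $t_e$, each has tangent line perpendicular to $L$, so $C_v$ and $C_w$ are mutually tangent at $t_e$. The two arcs forming $e$ are perpendicular to $C_v$ and $C_w$ respectively at $t_e$, hence each has tangent direction along $L$; the two pieces therefore join with a common tangent, and the edge is smooth (this is exactly the equal-angle behaviour recorded in Property~\ref{prop:circleAngle}). For non-crossing, I would use that $C_v$ lies in the curvilinear-triangle interstice bounded by the three face circles around $v$, that these interstices are pairwise disjoint, and that the three arcs at $v$ remain inside $C_v$, being M\"obius images of radii. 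Distinct edges thus meet only at shared endpoints or at their common bend point, so no crossings occur.

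The main obstacle is handling vertices of degree less than three, where $G^*$ fails to be a simple triangulation (a degree-$2$ vertex yields a dual bigon, a degree-$1$ vertex a dual loop) and the required resolution changes to $180^\circ$ or becomes vacuous. I would reduce to the cubic case by suppressing every degree-$2$ vertex to obtain a cubic core $H$, drawing $H$ as above, and then reinserting each suppressed vertex as a point on the corresponding drawn edge: a sub-arc of a circular arc is again a single circular arc, and any point on a smooth arc has collinear tangents, so each reinserted vertex automatically gets $180^\circ$ resolution while each resulting edge still uses at most two arcs. Degree-$1$ vertices are attached by routing a short free arc into an adjacent face, and components that are merely paths or cycles (having no degree-$3$ vertex) are drawn directly as segments, arcs, or circles. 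The remaining technical points to settle are that the primal--dual packing exists for the relevant planar triangulation (so that $3$-connectivity, convenient for uniqueness, can be arranged or dispensed with), and that the reinsertion and pendant arcs respect planarity, which they do since they lie on or beside already-placed non-crossing curves.
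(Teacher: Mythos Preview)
Your core mechanism is the same as the paper's: a circle packing, Lemma~\ref{lem:threepoints} inside each disk, and the observation that perpendicularity to tangent circles forces the two half-edges to share a tangent at the contact point. Where you diverge is in \emph{which} packing you use. The paper simply applies the Koebe--Andreev--Thurston theorem to $G$ itself, obtaining one disk per vertex with tangencies exactly along the edges of $G$; it then invokes Lemma~\ref{lem:threepoints} in each disk and is done. You instead pack the dual $G^*$ and recover the vertex disks as the orthogonal dual circles. But those dual circles are themselves a tangent-disk packing of $G$ (two of them share a tangent at $t_e$, as your own smoothness argument shows), so after the detour you land on the same object the paper starts from.

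The cost of the detour is real: to make $G^*$ a simple triangulation you need $G$ to be $3$-regular and $3$-connected, and you then spend a paragraph on suppressing degree-$2$ vertices, reattaching pendants, and worrying about multi-edges, loops, and $3$-connectivity of the core---technical points you yourself flag as unresolved. None of this is needed. The circle-packing theorem applies directly to any simple planar graph, not only to triangulations, so a packing of $G$ exists outright; vertices of degree one or two are handled inside their own disks by the obvious degenerations of Lemma~\ref{lem:threepoints} (a hyperbolic geodesic for degree two, any perpendicular arc for degree one). Dropping the dual construction and packing $G$ directly gives the paper's one-paragraph proof and eliminates the reduction entirely.
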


\drieplaatjes[scale=0.75]{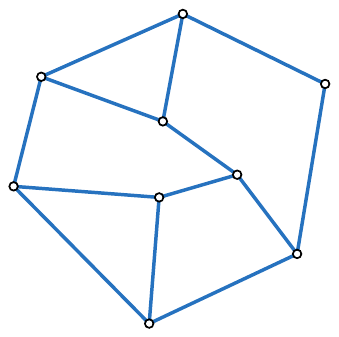}{deg3lom1-perp}{deg3lom1-output} {(a) A planar graph of maximum degree 3. (b) A representation of the graph as tangent circles according to the Koebe--Andreev--Thurston theorem, together with arcs connecting each vertex perpendicularly to the disk tangency points. Layout generated using Ken Stephenson's \href{http://www.math.utk.edu/~kens/CirclePack/}{CirclePack} software. (c) The final smooth $2$-Lombardi drawing. }

\begin {proof}
We apply the Koebe--Andreev--Thurston theorem to create a representation of the given graph as the intersection graph of tangent circles, as in Figure~\ref {fig:deg3lom1-perp}. Each circle has three contact points which will be the bend points of its incident edges. We apply Lemma~\ref {lem:threepoints} to the circles to obtain a vertex and half-edge drawing inside each disk. Since at each contact point two half-edges meet at an angle of $180^\circ$,
the result is a planar smooth $2$-Lombardi drawing of $G$. 
\end {proof}

\subsection{Planar $2$-Lombardi pointed drawings for planar graphs}

We now show that every planar graph allows a planar $2$-Lombardi drawing with pointed joints. The approach is similar to the previous section, but the drawing method inside the disks is different.
We need the following lemmas:

\begin {lemma} \label {lem:quadrants}
Let $C$ be a circle, and $P$ be a set of $n$ points on $C$. Additionally suppose that the four integers $n_1, n_2, n_3, n_4$ sum up to $n$ and satisfy the inequalities $\lfloor n/4\rfloor\le n_i\le\lceil n/4\rceil$ and $\lfloor n/2\rfloor\le n_i+n_{(i+1)\bmod 4}\le\lceil n/2\rceil$. Then there exist two circles $A$ and $B$ disjoint from $P$ such that $A$, $B$, and $C$ are pairwise perpendicular and such that $A$ and $B$ subdivide $P$ into four sets of cardinality $n_1$, $n_2$, $n_3$ and $n_4$.
\end {lemma}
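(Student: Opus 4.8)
The plan is to interpret the perpendicularity constraint hyperbolically. Regard $C$ as the ideal boundary of the Poincar\'e disk model of the hyperbolic plane; then a circle perpendicular to $C$ is exactly a hyperbolic geodesic, and the requirement that $A$, $B$, and $C$ be pairwise perpendicular says precisely that $A$ and $B$ are two geodesics meeting at a right angle. Each geodesic meets $C$ in two ideal endpoints, so $A$ and $B$ together determine four points of $C$; requiring $A$ and $B$ disjoint from $P$ just asks that these four points avoid $P$. The four points cut $C$ into four arcs, and the lemma asks for these arcs to carry the prescribed counts $n_1,\dots,n_4$. I will use the standard fact that two geodesics with ideal endpoints $\{a_1,a_2\}$ and $\{b_1,b_2\}$ are perpendicular if and only if the two pairs are harmonic, i.e.\ have cross-ratio $-1$; in particular the pairs must interleave, so the four endpoints occur in the cyclic order $a_1,b_1,a_2,b_2$ and the four arcs between consecutive endpoints are exactly the four ``quadrants.''

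With this in hand the problem splits into a combinatorial choice followed by a single scalar equation. First I would fix a cyclic grouping of $P$ into four consecutive blocks of sizes $n_1,n_2,n_3,n_4$; this determines four open gaps $G_0,\dots,G_3$ (the arcs between consecutive blocks, each free of points of $P$) into which the four cut points must fall, with $a_1\in G_0$, $b_1\in G_1$, $a_2\in G_2$, $b_2\in G_3$. For any choice of the four points inside their respective gaps the four arcs automatically contain $n_1,\dots,n_4$ points, so the \emph{only} remaining requirement is that $A=\{a_1,a_2\}$ and $B=\{b_1,b_2\}$ be perpendicular, i.e.\ that the cross-ratio $(a_1,a_2;b_1,b_2)$ equal $-1$.

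To produce such a harmonic configuration I would run a continuity argument. Working in a chart sending $C$ to $\mathbb{R}\cup\{\infty\}$ with $a_1<b_1<a_2<b_2$, the cross-ratio is a continuous function of the four gap-positions that is strictly negative (because the pairs interleave), and it tends to $0^-$ as an adjacent $a$--$b$ pair is brought together and to $-\infty$ as $a_2$ and $b_1$ (or $a_1$ and $b_2$) are brought together. Sliding the cut points within their gaps therefore moves the cross-ratio across a range of negative values, and by the intermediate value theorem the value $-1$ is attained provided this range brackets $-1$. Equivalently, one may fix $A$ to be a geodesic realizing the near-bisection $n_1+n_2\mid n_3+n_4$ and sweep the one-parameter family of geodesics $B$ perpendicular to $A$ (these are symmetric across $A$, so each has one endpoint on each side and simultaneously subdivides the two sides), using the remaining translational freedom along $A$ to match both sub-splits $n_1\mid n_2$ and $n_4\mid n_3$.

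The crux, and the place where the hypotheses are used, is showing that $-1$ actually lies in the attainable range of the cross-ratio. This is what the balance inequalities guarantee: the conditions $\lfloor n/4\rfloor\le n_i\le\lceil n/4\rceil$ and $\lfloor n/2\rfloor\le n_i+n_{(i+1)\bmod4}\le\lceil n/2\rceil$ force opposite blocks to have nearly equal sizes and every consecutive pair to be a near-bisection, so the four gaps are spread around $C$ rather than clustered; this rules out the degenerate situation in which the cross-ratio is pinned away from $-1$ and lets the two sub-splits be met at once. I expect the main obstacle to be making this bracketing rigorous while handling ties---points that would otherwise be forced to lie on $A$ or $B$---for which the floor/ceiling slack in the hypotheses provides exactly the room needed to perturb the cut points off of $P$.
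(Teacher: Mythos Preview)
Your hyperbolic reformulation---interpreting $A$ and $B$ as perpendicular geodesics in the Poincar\'e model and using the cross-ratio $=-1$ characterization---is correct and matches the paper's framing. The overall reduction to an intermediate-value argument is also shared. However, there is a genuine gap, and it stems from a misconception about the role of the hypotheses.

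You write that the balance inequalities on the $n_i$ ``force opposite blocks to have nearly equal sizes and every consecutive pair to be a near-bisection, so the four gaps are spread around $C$ rather than clustered.'' This is not so: the lemma places no constraint on \emph{where} the points of $P$ sit on $C$. All $n$ points may lie in an arbitrarily short arc, making three of your four gaps $G_i$ arbitrarily small and clustered while the fourth is nearly all of $C$. The balance hypotheses are purely combinatorial---they constrain the \emph{counts} $n_i$, not the geometry of the gaps---so they cannot be what makes your cross-ratio bracketing succeed. In fact the geometric statement you actually need (that any four nonempty open arcs in cyclic order admit a harmonic quadruple with one point in each) holds regardless of the $n_i$; but you have not proved it, and the reason you give for it is wrong. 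Your ``alternative'' sweep argument has the same problem: once $A$ is fixed, the two endpoints of a perpendicular $B$ move in a coupled way, and hitting both $G_1$ and $G_3$ simultaneously is a genuine constraint that the balance conditions do nothing to resolve.

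The paper closes this gap by a different route. It first proves a continuous analogue: for any smooth probability measure on $C$, there exist circles $A,B$ with $A,B,C$ pairwise perpendicular whose four arcs each carry mass exactly $1/4$. The key idea is a rotation argument: start with any bisecting geodesic $X$ and its unique perpendicular bisecting geodesic $Y$; the four arc-masses are $\tfrac14\pm x$ in alternation; now continuously rotate the pair $(X,Y)$, keeping them perpendicular and each bisecting, until $X$ occupies $Y$'s original position, at which point $x$ has changed sign, so IVT yields $x=0$. The discrete lemma follows by smoothing $P$ into such a measure, taking a limit, perturbing slightly off $P$, and doing a short case analysis modulo~$4$ to match the given $n_i$. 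The balance inequalities enter only in that final case analysis, not in the geometric existence step---which is exactly where your sketch tried to invoke them.
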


\drieplaatjes[scale=0.75] {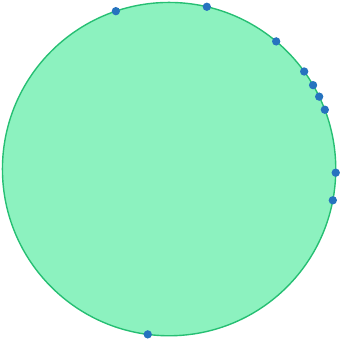} {plalom2-split} {plalom2-edges} {(a) A disk with a set of connection points on its boundary. (b) A placement for the vertex in the disk that divides the connection points into four quadrants. (c) The actual connections are not fixed, and guaranteed to not intersect.}

\ifFull

It is convenient to begin with a continuous analogue of the lemma. We define a \emph{smooth} probability distribution on $C$ to be a distribution that assigns a nonzero probability to any arc of $C$, such that arbitrarily short arcs have a probability that approaches zero.

\begin{lemma}
\label{lem:smooth}
Let $C$ be a circle, and $\Pi$ be a smooth probability distribution on $C$. Then there exist two circles $A$ and $B$ such that $A$, $B$, and $C$ are pairwise perpendicular and such that the four arcs of $C$ formed by its crossing points with $A$ and $B$ each have probability $1/4$ under distribution $\Pi$.
\end {lemma}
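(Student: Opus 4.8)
The plan is to regard $C$ as the ideal boundary of the hyperbolic plane in the Poincar\'e disk model, so that a circle perpendicular to $C$ is exactly a hyperbolic geodesic and two such circles are perpendicular in the required sense precisely when they are orthogonal geodesics. The lemma then becomes the statement that the boundary measure $\Pi$ can be split into four equal arcs by a pair of orthogonal geodesics, and I would prove it by a continuity argument modeled on the classical fact that any planar measure admits a four-equipartition by two perpendicular lines. Throughout I treat lines through the center of $C$ as degenerate circles orthogonal to $C$, so that ``geodesic'' covers both cases.

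First I would set up the family of \emph{bisecting geodesics}. Since $\Pi$ is atomless and assigns positive mass to every arc, the cumulative-mass function is continuous and strictly increasing around $C$, so each $p\in C$ has a unique antipode $\bar p$ for which both arcs from $p$ to $\bar p$ carry mass $1/2$; the map $p\mapsto\bar p$ is a continuous fixed-point-free involution. The geodesic $A(p)$ with ideal endpoints $p$ and $\bar p$ bisects $\Pi$, and since $A(p)=A(\bar p)$, these geodesics form a space $\mathcal{B}$ homeomorphic to a circle as $p$ ranges over $C$. Next, for a fixed bisecting geodesic $A$ I would produce its orthogonal bisecting partner: the geodesics perpendicular to $A$ are pairwise ultraparallel, hence disjoint and nested, so as their foot point slides along $A$ from one ideal endpoint to the other they sweep the disk monotonically. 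The mass lying on one fixed side therefore increases continuously from $0$ to $1$, and by the intermediate value theorem there is a unique perpendicular geodesic $B(A)$ that also bisects $\Pi$. Writing $m_1,m_2,m_3,m_4$ for the masses of the four arcs cut out by $A$ and $B(A)$ in cyclic order, the two bisection conditions give $m_1+m_2=m_2+m_3=1/2$, whence $m_1=m_3$ and $m_2=m_4$; it remains only to locate a configuration with $m_1=m_2$.

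Finally I would exploit the symmetry of the construction. Because $A$ is perpendicular to $B(A)$ and bisects $\Pi$, uniqueness forces $B(B(A))=A$, so $A\mapsto B(A)$ is a continuous involution of the connected space $\mathcal{B}$. Interchanging the roles of $A$ and $B(A)$ relabels the four arcs by a single cyclic shift, so the discrepancy $f(A)=m_1-m_2$ satisfies $f(B(A))=m_2-m_3=-f(A)$, using $m_1=m_3$. Choosing any $A$ and following a path in $\mathcal{B}$ from $A$ to $B(A)$, the continuous function $f$ passes from $f(A)$ to $-f(A)$ and therefore vanishes somewhere, yielding a bisecting geodesic for which $m_1=m_2=m_3=m_4=1/4$. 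Taking $A$ and $B$ to be this geodesic and its orthogonal partner proves the lemma.

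The main obstacle is making the continuity and sign-reversal of this involution rigorous: I must check that $\bar p$, the partner $B(A)$, and the four arc masses all depend continuously on the parameters, and that the ultraparallel/nesting claim indeed makes the sweep monotone so that $B(A)$ is well defined and unique. Each of these steps leans essentially on the smoothness hypothesis on $\Pi$ (no atoms, positive mass on every arc), which supplies the strict monotonicity underlying every uniqueness and intermediate-value argument; without it the antipode and the perpendicular bisector could fail to be unique or to vary continuously. The degenerate case in which $A$ or $B(A)$ happens to be a diameter is harmless under the convention above and requires no special treatment.
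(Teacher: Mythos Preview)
Your proposal is correct and follows essentially the same approach as the paper: both arguments view $C$ as the boundary of the Poincar\'e disk, consider a bisecting hyperbolic line together with its unique perpendicular bisecting partner, observe that the four arc masses take the form $1/4\pm x$, and then slide the pair continuously until the roles of the two lines are interchanged, at which point $x$ has changed sign and the intermediate value theorem finishes. Your write-up is somewhat more explicit (parametrizing the bisecting geodesics as a circle $\mathcal{B}$ and phrasing the role-swap as an involution with $f(B(A))=-f(A)$), but this is a clarification of the same argument rather than a different one.
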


\begin{proof}
We may view $A$ and $B$ as arcs inside $C$ (ignoring part of the circles) that end perpendicular to $C$, and cross each other at a $90^\circ$ angle. Figure~\ref {fig:plalom2-split} illustrates this. We  can consider $C$ as a hyperbolic plane in the Poincar\'e disc model. With this interpretation, $A$ and $B$ represent perpendicular lines in this plane, and $C$ is the set of points at infinity.

Let $X$ be a line that divides $C$ into two arcs that each have probability $1/2$. There exists a  (combinatorially) unique line $Y$ perpendicular to $X$ that also divides $C$ into two arcs with probability $1/2$. The four arcs formed by the crossings of $C$ with both $X$ and $Y$ necessarily have probabilities $1/4+x,1/4-x,1/4+x,1/4-x$ for some $x$, but it will not necessarily be the case that $x=0$. Now, we conceptually rotate $X$ and $Y$, keeping them perpendicular and maintaining invariant the property that each of $X$ and $Y$ divides $P$ into two equal-probability arcs. As we do so, $x$ will change continuously; by the time we rotate $X$ into the position initially occupied by $Y$, $x$ will have negated its original value. Therefore, by the intermediate value theorem, there must be some position during the rotation at which $x=0$. The circles $A$ and $B$ formed by extending $X$ and $Y$ outside the model of the hyperbolic plane, for this position, satisfy the statement of the lemma.
\end{proof}

\begin{proof}[of Lemma~\ref{lem:quadrants}.]
For any sufficiently small number $\epsilon$, let $\Pi_\epsilon$ be the smooth probability distribution formed by adding a uniform distribution with total probability $\epsilon$ on all of $C$ to a uniform distribution with total probability $1-\epsilon$ on the points within distance $\epsilon$ of $P$.
Apply Lemma~\ref{lem:smooth} to $\Pi_\epsilon$, and let $A$ and $B$ be pairs of circles obtained in the limit as $\epsilon$ goes to zero. Then (if points on the boundaries of the arcs are assigned fractionally to the two arcs they bound as appropriate) the number of points assigned to each of the four arcs of $C$ disjoint from $A$ and $B$ is exactly $n/4$.

Next, rotate $A$ and $B$ by a small amount around their crossing point (as hyperbolic lines, that is) preserving their perpendicularity to each other and to $C$. This rotation causes them to become disjoint from all points in $P$. Each of the four arcs determined by the four crossing points, and each of the two longer arcs determined by two of the four crossing points, gains or loses only a fractional point by this rotation, so the inequalities $\lfloor n/4\rfloor\le n_i\le\lceil n/4\rceil$ and $\lfloor n/2\rfloor\le n_i+n_{(i+1)\bmod 4}\le\lceil n/2\rceil$ (where $n_i$ denotes the size of the $i$th arc) remain true after this rotation. However, there may be more than one solution to this system of inequalities, so we analyze cases according to the value of $n$ modulo four to determine that the solution obtained geometrically in this way matches the values of $n_i$ given to us in the lemma:
\begin{itemize}
\item If $n=0\pmod{4}$, the only choice for the values of $n_i$ is that all of them are equal to $n/4$.
\item If $n=1\pmod{4}$, then three of the $n_i$ must be $\lfloor n/4\rfloor$ and one must be $\lceil n/4\rceil$. By exchanging the roles of $A$ and $B$ as necessary we can ensure that the quadrant that is supposed to contain the larger number of points is the one that actually does.
\item If $n=2\pmod{4}$ then the only solution to the inequalities is that two opposite quadrants have $\lfloor n/4\rfloor$ points and the other two have $\lceil n/4\rceil$. Again, by exchanging $A$ and $B$ if necessary we can ensure that the correct two quadrants have the larger number of points.
\item If $n=3\pmod{4}$, then one of the $n_i$ must be $\lfloor n/4\rfloor$ and the remaining three must be $\lceil n/4\rceil$. Again, by exchanging the roles of $A$ and $B$ as necessary we can ensure that the quadrant that is supposed to contain the smaller number of points is the one that actually does.
\end{itemize}
Thus, in each case the partition satisfies the requirements of the lemma.
\end{proof}
\else
We defer the proof to the full version of the paper~\cite{XXX}.
\fi

\begin {lemma} \label {lem:manypoints}
  Given a circle $C$ and a set $P$ of $n$ points on $C$, there exists a point $p$ in $C$ such that we can draw $n$ edges from $p$ to the points in $P$ as circular arcs that lie completely inside $C$, do not cross each other, and meet in $p$ at $360/n^\circ$ angles.
\end {lemma}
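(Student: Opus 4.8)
The plan is to regard $C$ as the boundary circle of the Poincar\'e disk and to obtain $p$ from the balanced angular partition supplied by Lemma~\ref{lem:quadrants}. The $n$ edges must leave $p$ in equally spaced directions $360/n^\circ$ apart, so the first step is to split these $n$ direction slots among four consecutive $90^\circ$ sectors around $p$. For a suitable rotational offset of the slot pattern, the counts $n_1,n_2,n_3,n_4$ of slots in the four quarters necessarily obey $\lfloor n/4\rfloor\le n_i\le\lceil n/4\rceil$ and $\lfloor n/2\rfloor\le n_i+n_{(i+1)\bmod 4}\le\lceil n/2\rceil$, which are exactly the hypotheses of Lemma~\ref{lem:quadrants}; the four cases of $n$ modulo four correspond to the four cases in that lemma's proof. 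Feeding these counts to Lemma~\ref{lem:quadrants} produces two circles $A$ and $B$, disjoint from $P$, pairwise perpendicular to one another and to $C$, dividing $P$ into quadrants of the prescribed sizes. I would then match the rotational offset of the slot pattern to the orientation of $A$ and $B$ (using the freedom to exchange $A$ and $B$) so that the $i$th sector and $i$th quadrant carry the same count.

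Next I would place the vertex at $p:=A\cap B$. Since $A\perp B$, the four quadrant regions subtend exactly $90^\circ$ at $p$; each region is a convex hyperbolic wedge, so the straight sightline from $p$ to any point of quadrant $i$ lies in the $i$th $90^\circ$ sector, and the $n_i$ points of that quadrant therefore occupy a single quarter of the directions around $p$. Taking the points of quadrant $i$ in cyclic order and assigning them in order to the $n_i$ equally spaced slots lying in that sector makes consecutive edges around $p$ differ by exactly $360/n^\circ$, giving perfect angular resolution; this is globally consistent because, for an interior point $p$, the cyclic order of $P$ as seen from $p$ is the cyclic order along $C$. Each edge is then realized as the circular arc leaving $p$ in its assigned direction and ending at its point, whose existence follows from the arc-locus computation of Property~\ref{prop:arcLocus}, with the equal-angle statement of Property~\ref{prop:circleAngle} relating the angles at its two ends. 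Non-crossing between quadrants is immediate since $A$ and $B$ act as barriers, and within a quadrant the arcs share the apex $p$ with their leaving directions and their endpoints in the same cyclic order, so they are nested and disjoint away from $p$.

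The hard part will be verifying that every arc remains inside $C$ and inside its own quadrant, never bulging across $A$, across $B$, or across the bounding arc of $C$ before reaching its endpoint. The angular slack is what should rescue this: the assigned directions sit strictly inside the $90^\circ$ sector, offset by half of $360/n^\circ$ from the sides $A$ and $B$, and the endpoints sit strictly inside the corresponding arc of $C$, so the leaving directions that keep the arc inside the lune ought to form an open interval containing the assigned one. I would make this rigorous by fixing an endpoint and letting the leaving direction at $p$ sweep from one geodesic side of the sector to the other, arguing via Property~\ref{prop:arcLocus} that the resulting arc sweeps monotonically across the lune and stays inside it throughout; the flexibility in the connections depicted in the accompanying figure is precisely this interval of valid directions. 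Should a direct sweep be awkward for the most lopsided distributions, the fallback is to recurse, applying the same quadrant construction inside each $90^\circ$ sector to confine every arc to an arbitrarily thin cell, whereupon remaining inside $C$ and avoiding crossings becomes automatic.
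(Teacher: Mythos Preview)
Your overall strategy matches the paper's: invoke Lemma~\ref{lem:quadrants} to obtain perpendicular circles $A$ and $B$ partitioning $P$ into quadrants with the right counts, place $p$ at $A\cap B$, orient the port pattern so each $90^\circ$ sector gets the matching number of ports, and connect ports to points in matching cyclic order.

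The gap is exactly where you flag it: keeping each arc inside its quadrant. Your sketch (``angular slack,'' ``ought to form an open interval,'' a monotone sweep) does not pin down why a general port direction yields an arc confined to the quadrant, and the recursion fallback cannot work---once $p$ is chosen you cannot re-run the quadrant construction to refine the cells, since any further subdivision would move the vertex. The paper closes this gap by exhibiting explicit extremal arcs: within a quadrant there is a circular arc from $p$ tangent to $A$ at $p$ and tangent to $C$ where $C$ meets $B$ (seen by M\"obius-transforming $A,B$ to perpendicular lines, after which this arc has half the radius of $C$). From this one gets, for each boundary point $q$, two arcs from $p$ to $q$ that stay in the quadrant and are tangent at $p$ to $A$ and to $B$ respectively; these are the endpoints of the sweep, and an intermediate value argument then shows that the arc with any intermediate port direction---in particular the assigned one---also stays in the quadrant. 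That concrete pair of extremal arcs is the missing ingredient in your argument.

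A smaller point: your within-quadrant non-crossing claim (``nested and disjoint'') is the right intuition but not a proof. The paper makes it precise by noting that two circular arcs through $p$ lie on circles meeting at $p$ and at exactly one other point; whether that second crossing lies inside the quadrant is determined by comparing the cyclic order of the two arcs at $p$ with their order on the boundary of $C$, and since you matched these orders the second crossing is outside.
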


\begin {proof}
Draw $n$ ports around a point with equal angles, and draw two perpendicular lines through the point (not coinciding with any ports), and count the number of points in each quadrant. Let these numbers be $n_1, \ldots, n_4$ and find two circles $A$ and $B$ as in Lemma~\ref {lem:quadrants}. Then we place $p$ at their intersection point inside $C$. Now orient the ports at $p$ such that each quadrant has the correct number of ports.

Within any quadrant, there is a circular arc tangent to $C$ at the point where it is crossed by $B$, and tangent to $A$ at point $p$; this can be seen by using a M\"obius transformation to transform $A$ and $B$ into a pair of perpendicular lines, after which the desired arc has half the radius of $C$. By the intermediate value theorem, there are two circular arcs from $p$ to any point $q$ on the boundary arc of the quadrant that remain entirely within the quadrant and are tangent to $A$ and $B$ respectively. By a second application of the intermediate value theorem, there is  a unique circular arc that connects $p$ to each connection point on the boundary of $C$, such that the outgoing direction at $p$ matches the port, and such that the arc remains entirely within its quadrant.

Any two arcs that belong to the same quadrant belong to two circles that cross at $p$ and at one more point. Whether that second crossing point is inside or outside of the quadrant can be determined by the relative ordering of the two arcs at $p$ and on the boundary of the quadrant. However, since the ordering of the ports and of the connection points is the same, none of the crossings of these circles are within the quadrant, so no two arcs cross.
\end {proof}

Figure~\ref {fig:plalom2-edges} illustrates the lemma.

\begin {theorem}
\label{thm:pointed-2L}
Every planar graph has a planar pointed $2$-Lombardi drawing. 
\end {theorem}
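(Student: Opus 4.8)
The plan is to mirror the structure of the preceding maximum-degree-three result, but to replace Lemma~\ref{lem:threepoints} (which forces the arcs to leave the disk perpendicularly, and hence produces smooth joints) by Lemma~\ref{lem:manypoints} (which guarantees only perfect angular resolution at the vertex, and hence generally yields pointed joints). First I would invoke the Koebe--Andreev--Thurston theorem to represent the given planar graph $G$ as the tangency graph of a collection of interior-disjoint disks, one disk $C_v$ per vertex $v$, with two disks tangent exactly when the corresponding vertices are adjacent. Each boundary circle $\partial C_v$ then carries one contact point for every neighbor of $v$, and these $\deg(v)$ points occur around $C_v$ in the cyclic order prescribed by the planar embedding.

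Next, for each disk I would apply Lemma~\ref{lem:manypoints} with $n=\deg(v)$ and $P$ equal to the set of contact points on $\partial C_v$. This produces a point $p_v$ in the interior of $C_v$ together with $\deg(v)$ circular arcs from $p_v$ to the contact points, all lying completely inside $C_v$, pairwise non-crossing, and meeting at $p_v$ with perfect angular resolution $360^\circ/\deg(v)$. I place the vertex $v$ at $p_v$ and treat each such arc as a half-edge.

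Then I would glue half-edges across the tangencies: at the single contact point shared by two tangent disks $C_u$ and $C_v$ (corresponding to the edge $uv$), the half-edge emanating from $p_u$ and the half-edge emanating from $p_v$ share that boundary point, so their union is one edge drawn with two circular arcs. The result is a $2$-Lombardi drawing, and perfect angular resolution holds at every vertex by the lemma. Planarity follows because the disks are interior-disjoint and each arc meets $\partial C_v$ only at its contact endpoint: arcs lying inside a common disk do not cross by Lemma~\ref{lem:manypoints}, while arcs lying in different disks are separated by the disk boundaries. Finally, the drawing is \emph{pointed} rather than smooth precisely because Lemma~\ref{lem:manypoints}, unlike Lemma~\ref{lem:threepoints}, does not force the arcs to meet $\partial C_v$ perpendicularly, so the two tangent directions of an edge at its bend point need not be antiparallel.

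The step that warrants the most care---and the only place where geometric content enters beyond the already-established lemmas---is verifying that the rotation system induced on the half-edges at each $p_v$ agrees with the fixed planar embedding supplied by the circle packing. Since Lemma~\ref{lem:manypoints} arranges the ports at $p_v$ in the same cyclic order as the boundary points $P$, and the packing places those contact points around $\partial C_v$ in the embedding's rotational order, the assembled drawing realizes exactly the prescribed planar embedding, so no crossings are introduced where arcs of different disks meet at their common tangency points. I expect this consistency check, together with confirming strict interiority of each arc, to be the main obstacle; everything else reduces directly to the cited theorem and lemma.
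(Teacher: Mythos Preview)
Your proposal is correct and follows essentially the same approach as the paper: apply the Koebe--Andreev--Thurston circle packing, then use Lemma~\ref{lem:manypoints} inside each disk to place the vertex and its half-edges, and glue at the tangency points. Your write-up is in fact more careful than the paper's short proof, spelling out why arcs in different disks cannot cross and why the rotation system is preserved; these are exactly the details the paper leaves implicit.
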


\begin {proof}
As in the previous section, we first obtain a touching-circles representation of a the given graph $G$ using the Koebe--Andreev--Thurston theorem. Each vertex $v$ in $G$ is represented by a circle $C$; place $v$ together with arcs connecting it to the set of contact points on $C$ using Lemma~\ref {lem:manypoints}. The arcs meet up at the contact points to form (non-smooth) $2$-Lombardi edges.
\end{proof}

\subsection{Smooth $3$-Lombardi planar realization for planar graphs}

Note that the $2$-Lombardi planar realization of the previous section has non-smooth bends in each edge. As we now show, every planar graph also has a smooth $3$-Lombardi drawing.

It seems likely that every planar graph $G$ has a smooth $3$-Lombardi drawing  formed  by perturbing each edge of a straight-line drawing of $G$ into a curve formed by two very small circular arcs near each endpoint of the edge, connected to each other by a straight segment. However, the details of this construction are messy.
An alternative construction is much simpler, once Theorem~\ref{thm:pointed-2L} is available.

\begin{theorem}
Every planar graph has a planar smooth $3$-Lombardi drawing. 
\end {theorem}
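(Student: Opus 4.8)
The plan is to start from the pointed planar $2$-Lombardi drawing guaranteed by Theorem~\ref{thm:pointed-2L} and to \emph{round off} each of its non-smooth joints using a single additional circular arc, turning every two-arc edge into a smooth three-arc edge. In the drawing produced by Theorem~\ref{thm:pointed-2L}, each edge of $G$ consists of exactly two circular arcs that meet at the tangency point $b$ of the two Koebe--Andreev--Thurston circles representing its endpoints, and the only failure of smoothness occurs at these bend points, which lie in the interiors of the edges. Thus it suffices to replace, in a small neighborhood of each $b$, the two arcs by a smooth connection, without disturbing the tangent directions at the vertices.

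For a single bend point $b$ where arcs $\gamma_1$ and $\gamma_2$ meet, I would choose points $b_1\in\gamma_1$ and $b_2\in\gamma_2$ at small arc-length from $b$ and connect $b_1$ to $b_2$ by a circular \emph{fillet} arc that is tangent to $\gamma_1$ at $b_1$ and to $\gamma_2$ at $b_2$. Counting degrees of freedom, a circle tangent to $\gamma_1$ at the chosen point of $\gamma_1$ and tangent to $\gamma_2$ at the chosen point of $\gamma_2$ imposes four conditions (pass through each contact point, and match the tangent direction there) on the five parameters --- the two arc-length positions together with the three parameters of the circle --- so a one-parameter family of such fillets exists; letting the parameter tend to zero makes $b_1,b_2\to b$ and shrinks the fillet arc to the point $b$. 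Because the fillet is tangent to $\gamma_1$ at $b_1$ and to $\gamma_2$ at $b_2$, the resulting edge --- the truncation of $\gamma_1$, the fillet, and the truncation of $\gamma_2$ --- is continuously differentiable and consists of three circular arcs, so the modified drawing is a smooth $3$-Lombardi drawing.

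It then remains to check the two Lombardi invariants. Perfect angular resolution is automatic: every modification takes place in the interior of an edge, so the tangent direction of each edge at each of its endpoints is exactly the one produced by Theorem~\ref{thm:pointed-2L}, and the angles at the vertices are unchanged. For planarity, I would use that distinct edges have distinct bend points (each tangency point of the circle packing belongs to exactly one edge) and that these bend points lie strictly in the edge interiors of a planar drawing; hence, by choosing the fillet at each $b$ small enough, each fillet stays within a neighborhood of $b$ that meets no other part of the drawing, so no new crossings are created.

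The main obstacle will be the existence claim for the fillet arc and the simultaneous smallness needed for planarity, but both are routine and local. Existence follows by continuity: at the degenerate value $b_1=b_2=b$ the normals of $\gamma_1$ and $\gamma_2$ meet at $b$, and an intermediate-value (or implicit-function) argument produces, for every sufficiently small truncation, a genuine tangent circle lying on the correct (convex) side of the bend with radius tending to zero, so that the three-arc edge rounds the corner without a cusp. Since $G$ has finitely many edges and the original drawing is planar, one can fix a single scale small enough to carry out all the roundings at once, completing the construction. This realizes the ``alternative construction'' mentioned above and avoids the messier perturbation of a straight-line drawing.
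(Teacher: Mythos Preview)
Your proposal is correct and follows exactly the same approach as the paper: start from the pointed planar $2$-Lombardi drawing of Theorem~\ref{thm:pointed-2L} and round each bend by inserting a third circular arc tangent to both incident arcs, chosen small enough to avoid new crossings. You supply considerably more justification for the existence and smallness of the fillet than the paper does, but the underlying construction is identical.
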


\begin{proof}
Find a pointed planar $2$-Lombardi drawing by Theorem~\ref{thm:pointed-2L}.
For each pointed bend of the drawing formed by two circular arcs $a_1$ and $a_2$, replace the bend by a third circular arc tangent to both $a_1$ and $a_2$, with the two points of tangency close enough to the bend to avoid crossing any other edge.
\end{proof}

\section{Conclusions}
We have proven several new results about planarity of Lombardi drawings and about classes of graphs that can be drawn with $k$-Lombardi drawings rather than $1$-Lombardi drawings.
However, several problems remain open, including the following:
\begin{enumerate}
\item Characterize the subclass of planar graphs that have 1-Lombardi planar realizations.
\item Characterize the subclass of planar graphs that have smooth 2-Lombardi planar realizations.
\item Bound the (change in) curvature of edge segments in $k$-Lombardi drawings.
\item Address area and resolution requirements for Lombardi drawings of graphs.
\end{enumerate}

\paragraph*{\bf Acknowledgments.}
This research was supported in part by the National Science
Foundation under grants CCF-0830403, CCF-0545743, and CCF-1115971,
by the
Office of Naval Research under MURI grant N00014-08-1-1015,
and by the Louisiana Board of Regents through 
PKSFI Grant LEQSF (2007-12)-ENH-PKSFI-PRS-03.

{ 
\raggedright
\bibliographystyle{abuser}
\bibliography{lombardi,lombardi2}
}

\end{document}